\newtheorem{theorem}{Theorem}
\newcommand{\LL}{{\mathrm{L}}}
\newcommand{\dom}{{\mathcal{D}}}
\begin{document}

\title{A new version of the Aharonov-Bohm effect}
\author{C\'esar R. de Oliveira$^1$\;\; {\small and} \;\; Renan G. Romano$^2$\\ 
\small
\em $^1$Departamento de Matem\'{a}tica -- UFSCar, \small \it S\~{a}o Carlos, SP, 13560-970 Brazil\\
\small\it $^2$Departamento de Matem\'{a}tica -- UFSC, Blumenau, SC, 89036-004 Brazil\\}
\date{January 13, 2020}

\maketitle

\begin{abstract} 
We propose a simple situation in which the magnetic Aharonov-Bohm potential influences the values of the deficiency indices of the initial Schr\"{o}dinger operator, so determining whether the particle interacts with the solenoid or not. Even with the particle excluded from the magnetic field, the number of self-adjoint extensions of the initial Hamiltonian depends on the magnetic flux. This is a new point of view of the Aharonov-Bohm effect.
\end{abstract}

\

\

\section{Introduction}

Yakir Aharonov and David Bohm~\cite{AharonovBohm1959} have  argued that the (magnetic) potential can influence the physics of a quantum particle, even when the particle does not directly interact with the magnetic field  (see~\cite{ES} for an earlier, although more restricted, discussion).  Such kind of influence has been verified in scattering operators and  cross sections (see \cite{AharonovBohm1959,ruijser,dabrowskiStovicek,adamiTeta,rouxYaf,BallesterosWeder2009,deOliveiraPereira2010} to mention just some possible references) and, in some cases, in the Hamiltonian eigenvalues \cite{peshkin1,helffer, dOR}. Of course such influence can be detected in other quantities derived from eigenvalues and/or scattering operators. The ultimate experimental tests are due to Tonomura's group~\cite{Tonomura2,Tonomura1989} and Batelaan and collaborators~\cite{caprezBB,beckerB}.  

The general phenomenon in the magnetic Aharonov-Bohm (AB) effect may be summarized as follows. A magnetic field is confined to an impenetrable region, but it generates a nonzero vector potential outside this region which influences a quantum charged particle and is  usually probed through the  magnetic flux~$\kappa$. The border of this region is the solenoid. The particle motion is restricted to such magnetic  field free region but scattering quantities or eigenvalues may actually depend on~$\kappa$. 

In another direction, there is the problem of  the choice of the boundary condition at the solenoid. Such choice corresponds, mathematically, to  the choice of a self-adjoint extension of the initial magnetic Hamiltonian, since usually the preliminary physical information provides only a symmetric operator;  the self-adjoint  property is required (in fact equivalent)  for a unitary time evolution (conservation of probability) in quantum dynamics.  The topic is related to the existence of anomalies~\cite{BOA},   the presence of anisotropic scale invariances~\cite{brattan}, different surface spectra of Weyl semimetals~\cite{SV},  creation of a pointlike source in quantum field  theory~\cite{ZCK}, studies of topological quantum phases~\cite{AOS} and models in quantum gravity whose time evolution depend on boundary conditions at the origin~\cite{maeda}, to mention only a handful of examples that illustrate the well-known fact that different self-adjoint extensions  correspond to different physics.    An instructive discussion about self-adjoint extensions of the simple case of a particle in a potential well appears in~\cite{BFV}.

By applying a natural shielding method to the initial AB Hamiltonian \cite{kre,magniVG,deOliveiraPereira2008,deOliveiraPereira2011}, the Dirichlet boundary condition (i.e., wavefunctions vanish at the solenoid) is selected. In~\cite{dOR} we have recently proposed a modification of the AB Hamiltonian that is essentially self-adjoint, that is, a model for which there is exactly one self-adjoint extension;  the physical interpretation is the absence of contact of the particle with the solenoid in this case. This was obtained by the introduction of an additional potential that conveniently diverges close to the solenoid.  For circular solenoids, the  presence of the AB effect in this setting was confirmed through the variation of the first eigenvalue with respect to the magnetic flux.

The purpose of this note is to present a situation where, effectively, instead of  scattering quantities or the eigenvalues of the Hamiltonian, the Aharonov-Bohm effect is probed  by the number of self-adjoint extensions of the original energy operator. The basic point is to add a  potential, similar to our approach in~\cite{dOR}, but for solenoids of zero radius (such solenoids have been considered in the original paper by Aharonov and Bohm and in many other works; for instance in~\cite{adamiTeta,dabrowskiStovicek}). Although technically very simple, we have found a situation in which the AB magnetic potential is responsible for the values of the deficiency indices of the original Hamiltonian, a very strong influence that (to the best of our knowledge) has not been reported yet.  This is a new form of the Aharonov-Bohm effect and we think it is worth sharing.

Theorems~\ref{teo:contagem} and~\ref{teo:teo2} present a parameterization of the deficiency indices of the proposed model in terms of the magnetic flux; then the self-adjoint extensions are characterized through the standard von Neumann prescription. Some thought-experiments are proposed in Section~\ref{sectTE}. Concluding discussions appear in Section~\ref{sectConclusions}.

We close this Introduction with some words on philosophical aspects of this work. We have a system whose deficiency indices, and so the parametrization of  the (and how many)   self-adjoint extensions, depend on the magnetic potential; and by varying parameters, we may have any value for such indices (see Theorem~\ref{teo:teo2}). We consider that this adds appealing information to the foundations of Quantum Mechanics, and such results are obtained in a mathematically rigorous approach. 

Another point is that the considered model makes use of the idealized zero radius solenoid, and this was (technically) important for us (compare with~\cite{dOR} that has considered nonzero radius solenoids). Although there are some worries about idealizations, in many instances they make some models mathematically tractable, whereas still keeping important aspects of real systems and help us to understand the foundations of the physical theory; we reinforce that the original scattering calculations by Aharonov and Bohm have also used solenoids of zero radius. A recent interesting discussion about idealizations in the AB effect has been published by Earman~\cite{earman}.

This work is an example supporting the following view of Batelaan and Tonomura~\cite{BT}: ``The AB effect was already implicit in the 1926 Schr\"odinger equation, but it would be another three decades before theorists Yakir Aharonov and David Bohm pointed it out. And to this day, the investigation and exploitation of the AB effect remain far from finished.''

\section{Model and results}\label{sectMR}

Consider the Aharonov-Bohm model of an infinitely long solenoid of zero radius and the dynamics restricted to a plane orthogonal to the solenoid, and with scaled magnetic flux~$\kappa$. The (particle) accessible region is the punctured plane $\Omega:=\mathbb{R}^2\setminus\{0\}$, in the variables $(x,y)$,  and denote $r=\sqrt{x^{2}+y^{2}}$. The standard AB magnetic potential in~$\Omega$ is
\begin{equation}\label{eq:Akappa}
	\mathbf{A}_{\kappa}\left(x,y\right):=\frac{\kappa}{r^{2}}\left(-y,x\right).
\end{equation}

We also consider  an additional scalar (radial) potential~$V(r)$ in~$\Omega$ given by
\begin{equation*}
	V_{\alpha}\left(r\right):=\frac{\alpha}{r^{2}}\,,
\end{equation*}
with fixed $\alpha\in\mathbb{R}$; both $V_\alpha$ and $\mathbf A_\kappa$ are smooth and unbounded in~$\Omega$.  In suitable units, our initial Hamiltonian operator is
\begin{equation}\label{eq:ABdefmeu}
	 H_{\kappa,\alpha}\varphi:=\left(i\nabla+\mathbf{A}_{\kappa}\right)^{2}\varphi+V_{\alpha}\varphi,
\end{equation}
with domain given by the usual smooth functions of compact support $\varphi\in \mathrm{C}_{0}^{\infty}\left(\Omega\right)$. Note that $\alpha=0$ gives the AB initial Hamiltonian as considered in the original Aharonov-Bohm work~\cite{AharonovBohm1959}.

Using the decomposition of the Hilbert space $\LL^{2}(\Omega)=\LL^{2}(\mathbb{R}^{+},r\mathrm{dr})\otimes \LL^{2}(S^{1})$ with respect to angular momentum, the unitary transformation $W:\LL^{2}(\mathbb{R}^{+},r\mathrm{dr})\rightarrow \LL^{2}(\mathbb{R}^{+},\mathrm{dr})$ given by $Wf(r)=r^{\frac{1}{2}}f(r)$, where $\mathbb{R}^{+}=(0,\infty)$, and the completeness of $e^{im\theta}/\sqrt{2\pi}$, $m\in\mathbb{Z}$ in $\LL^{2}(S^{1})$, we can write
\begin{equation*}
	\LL^{2}(\Omega)=\bigoplus_{m=-\infty}^{\infty}(W^{-1}\LL^{2}(\mathbb{R}^{+}))\oplus\left[\frac{e^{im\theta}}{\sqrt{2\pi}}\right],
\end{equation*}
where $[w]$ denotes de linear span of the vector $w$. The above decomposition of the space $\LL^{2}(\Omega)$  induces a decomposition of the operator $H_{\kappa,\alpha}$ in the form
\begin{equation*}
	H_{\kappa,\alpha}=\bigoplus_{m=-\infty}^{+\infty}(W^{-1}H_{\kappa,\alpha}^{(m)}W\otimes\mathbf{1}),
\end{equation*}
where $\mathbf{1}$ is the identity operator and $H_{\kappa,\alpha}^{(m)}$ is defined, for each $m\in\mathbb{Z}$, by
\begin{equation}\label{eq:liouville}
	{H}_{\kappa,\alpha}^{(m)}:=-\dfrac{d^{2}}{dr^{2}}+\dfrac{1}{r^{2}}\left[(m-\kappa)^{2}-\dfrac{1}{4}\right]+\dfrac{\alpha}{r^{2}}\,,
\end{equation}
with domain $\mathrm{C}_{0}^{\infty}(0,\infty)$. Denoting by $T^{*}$ the adjoint of a linear operator $T$, recall that the deficiency subspaces of~$T$ are given by  $K_{\pm}(T):=\mathrm{Ker}\,(T^*\pm i\mathbf 1)$. The above decomposition  implies that the deficiency $K_{\pm}=K_\pm(H_{\kappa,\alpha})$ subspace  equations, 
\[
H_{\kappa,\alpha}^{*}\psi\pm i\psi=0\,,
\] associated with the deficiency indices $n_\pm(H_{\kappa,\alpha})$ of the operator $H_{\kappa,\alpha}$, has a solution $\psi$ if, and only if,
\begin{equation*}
	\psi(r,\theta)=\sum_{m\in\mathbb{Z}}r^{-\frac{1}{2}}\psi_{m}(r)e^{im\theta},
\end{equation*}
where $\psi_{m}$ is a solution to the $K_{\pm}$ equation associated with the operator $H_{\kappa,\alpha}^{(m)}$, for each $m\in\mathbb{Z}$. By the usual limit point and limit circle criteria, it readily follows that the operator $H_{\kappa,\alpha}$ is essentially self-adjoint for all $\alpha\geq 1$.

\begin{theorem}\label{teo:contagem}
	Given $\alpha<1$, the deficiency indices $n_{\pm}$ of the operator $H_{\kappa,\alpha}$ coincides with the number of integers $m\in\mathbb{Z}$ such that
	\begin{equation}\label{eq:contagemZ}
		\left|m-\kappa\right|<\sqrt{1-\alpha}.
	\end{equation}
	In particular, if $\kappa-\tilde{\kappa}\in\mathbb{Z}$, then $n_{\pm}(H_{\kappa,\alpha})=n_{\pm}(H_{\tilde{\kappa},\alpha})$.
\end{theorem}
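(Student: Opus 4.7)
The plan is to exploit the angular–momentum decomposition already established in the excerpt: since $H_{\kappa,\alpha} = \bigoplus_m (W^{-1} H^{(m)}_{\kappa,\alpha} W \otimes \mathbf{1})$, von Neumann's theory gives $n_\pm(H_{\kappa,\alpha}) = \sum_{m\in\mathbb{Z}} n_\pm(H^{(m)}_{\kappa,\alpha})$. So it suffices to compute $n_\pm$ for each radial operator $H^{(m)}_{\kappa,\alpha}$ on $(0,\infty)$ via Weyl's limit-point/limit-circle dichotomy, and then count.

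Rewriting \eqref{eq:liouville} as $H^{(m)}_{\kappa,\alpha} = -d^2/dr^2 + (\nu_m^2 - 1/4)/r^2$ with $\nu_m^2 := (m-\kappa)^2 + \alpha$, I would first deal with the endpoint at $+\infty$: a standard argument (the $L^2$ perturbation of $-d^2/dr^2$ outside a neighborhood of $0$ by $(\nu_m^2-1/4)/r^2$ preserves limit point at infinity) shows that $\infty$ is always limit point, so the question reduces entirely to the endpoint $0$. Near $0$, I would perform a Frobenius analysis of the deficiency equation $H^{(m)*}_{\kappa,\alpha}\psi \pm i\psi = 0$: the lower-order $\pm i\psi$ is irrelevant for the local behavior, so the two linearly independent solutions behave like $r^{1/2+\nu_m}$ and $r^{1/2-\nu_m}$ when $\nu_m\neq 0$, and like $r^{1/2}$, $r^{1/2}\log r$ when $\nu_m = 0$. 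Both behaviours are in $L^2$ near $0$ precisely when $\operatorname{Re}(\nu_m) \in (-1,1)$ and $|\nu_m|<1$, which is equivalent to $\nu_m^2 < 1$; otherwise only one is. Hence $H^{(m)}_{\kappa,\alpha}$ is limit circle at $0$ (giving $n_\pm = 1$) iff $\nu_m^2 < 1$, and limit point (giving $n_\pm = 0$) iff $\nu_m^2 \geq 1$.

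Translating $\nu_m^2 < 1$ as $(m-\kappa)^2 + \alpha < 1$, i.e.\ $|m-\kappa| < \sqrt{1-\alpha}$ (which already forces $\alpha < 1$ to have any solutions), I conclude that $n_\pm(H_{\kappa,\alpha})$ equals the number of integers $m$ satisfying \eqref{eq:contagemZ}. For the ``in particular'' clause, if $\kappa - \tilde{\kappa} = k \in \mathbb{Z}$, the map $m \mapsto m - k$ is a bijection of $\mathbb{Z}$ onto itself that sends the set $\{m : |m-\kappa| < \sqrt{1-\alpha}\}$ onto $\{m' : |m'-\tilde\kappa| < \sqrt{1-\alpha}\}$, so the two counts agree.

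The main subtlety to handle carefully is the ``fall to the center'' regime, i.e.\ the subcase in which $(m-\kappa)^2 + \alpha < 0$ so that $\nu_m$ is purely imaginary; then the indicial roots are complex conjugates $\tfrac{1}{2}\pm i\tau$ and both fundamental solutions have modulus $r^{1/2}$, hence both lie in $L^2$ near $0$ and the limit-circle conclusion still holds. One should verify that this is consistently captured by the single inequality $\nu_m^2 < 1$, so that no separate case enters the final count. The other easy checks to record are that the terms $\pm i\psi$ in the deficiency equation do not alter the Frobenius exponents at $0$, and that essentially self-adjointness for $\alpha \geq 1$ (already noted in the excerpt) is recovered from the formula, since then $\nu_m^2 \geq \alpha \geq 1$ for every $m$ and no integer satisfies \eqref{eq:contagemZ}.
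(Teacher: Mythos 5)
Your proposal is correct and follows essentially the same route as the paper: reduce to the radial operators $H^{(m)}_{\kappa,\alpha}$ via the angular-momentum decomposition, apply the Weyl limit-point/limit-circle criterion at the origin (with limit point at infinity), obtain the condition $(m-\kappa)^2+\alpha<1$ for $n_\pm(H^{(m)}_{\kappa,\alpha})=1$, and count the admissible integers $m$. The paper simply cites the standard criterion that the coefficient of $1/r^2$ be less than $3/4$, whereas you unpack it via the Frobenius indicial roots and explicitly treat the purely imaginary $\nu_m$ subcase, which the paper leaves implicit.
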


\begin{proof}
	A direct calculation shows that, by limit point/circle criterion, for each~$m$, the operator $H_{\kappa,\alpha}^{(m)}$ has deficiency indices $n_{\pm}=0$ or $n_{\pm}=1$. Moreover, $n_{\pm}=1$ if, and only if, the parameter~$\alpha$ in the scalar potential in the operator~\eqref{eq:liouville} satisfies
	\begin{equation*}
		\dfrac{(m-\kappa)^{2}+\alpha-\frac{1}{4}}{r^{2}}<\dfrac{3}{4r^{2}}\,.
	\end{equation*}
	This is equivalent to $\left|m-\kappa\right|<\sqrt{1-\alpha}$. Then, for each $m\in\mathbb{Z}$ that satisfies this inequality, we have a unique normalized solution $\psi_{m}^{\pm}$ to the $K_{\pm}$ equation associated with $H_{\kappa,\alpha}^{(m)}$ and a correspondent $r^{\frac{1}{2}}\psi_{m}^{\pm}(r)e^{im\theta}$ solution to the $K_{\pm}(H_{\kappa,\alpha})$ equation. Since these are all the solutions to the $K_{\pm}(H_{\kappa,\alpha})$ equations, we conclude the result.
\end{proof}

 For each  integer $p\geq0$, denote by $a_{p}:=1-(p/2)^{2}$ and $b_{p}:=1-((p+1)/2)^{2}$.  This notation is used in Theorem~\ref{teo:teo2}.
 
\begin{theorem}\label{teo:teo2}
	Let $\alpha<1$ be a fixed real number, $n_{\pm}(H_{\kappa,\alpha})$ the deficiency indices of the operator $H_{\kappa,\alpha}$ and let $p\geq0$ be an integer such that $\alpha\in(b_{p},a_{p}]$. Then the values of the deficiency indices of $H_{\kappa,\alpha}$, in terms of the magnetic flux~$\kappa$, are given by:
	\begin{itemize}
		\item[i)] If $\alpha=a_{p}$ with even $p>0$, then $n_{\pm}(H_{\kappa,\alpha})=p-1$, for all $\kappa\in\mathbb{Z}$ and $n_{\pm}(H_{\kappa,\alpha})=p$ otherwise.

		\item[ii)] If $\alpha=a_{p}$ with odd $p$, then $n_{\pm}(H_{\kappa,\alpha})=p-1$ for all $\kappa\in\frac{1}{2}+\mathbb{Z}$ and $n_{\pm}(H_{\kappa,\alpha})=p$ otherwise.

		\item[iii)] If $\alpha\in(b_{p},a_{p})$ with even $p$, then there exists $\kappa_{0}\in(0,\frac{1}{2})$ such that $n_{\pm}(H_{\kappa,\alpha})=p+1$ for all $\kappa\in(-\kappa_{0},\kappa_{0})+\mathbb{Z}$ and $n_{\pm}(H_{\kappa,\alpha})=p$ otherwise.

		\item[iv)] If $\alpha\in(b_{p},a_{p})$ with odd $p$, then there exists $\kappa_{0}\in(0,\frac{1}{2})$ such that $n_{\pm}(H_{\kappa,\alpha})=p$ for all $\kappa\in[-\kappa_{0},\kappa_{0}]+\mathbb{Z}$ and $n_{\pm}(H_{\kappa,\alpha})=p+1$ otherwise.
	\end{itemize}
\end{theorem}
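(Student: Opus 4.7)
The plan is to deduce Theorem~\ref{teo:teo2} from Theorem~\ref{teo:contagem} as a purely combinatorial problem of counting integers in an open interval. Set $s:=\sqrt{1-\alpha}$, so that~\eqref{eq:contagemZ} yields
\[
	n_\pm(H_{\kappa,\alpha})=N(\kappa,s):=\#\{m\in\mathbb{Z}:\kappa-s<m<\kappa+s\}.
\]
The hypothesis $\alpha\in(b_p,a_p]$ is equivalent, via $\alpha\mapsto\sqrt{1-\alpha}$, to $s\in[p/2,(p+1)/2)$, and the boundary case $\alpha=a_p$ corresponds exactly to $s=p/2$.

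I would then reduce the problem by two elementary symmetries of $N(\kappa,s)$. The first, $N(\kappa+1,s)=N(\kappa,s)$, is the integer periodicity already noted in Theorem~\ref{teo:contagem}; the second, $N(-\kappa,s)=N(\kappa,s)$, follows from the bijection $m\mapsto-m$ on $\mathbb{Z}$. Together these let us restrict attention to $\kappa\in[0,1/2]$, and each item of the theorem is then pulled back to all of $\mathbb{R}$ by translation.

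The rest is a four-case count. When $s=p/2$ the open interval $(\kappa-s,\kappa+s)$ has length exactly $p$, so it contains either $p$ or $p-1$ integers, dropping to $p-1$ precisely when both endpoints are themselves integers; this happens iff $\kappa\in\mathbb{Z}$ for even $p$ and iff $\kappa\in\tfrac{1}{2}+\mathbb{Z}$ for odd $p$, proving (i) and (ii). When $s\in(p/2,(p+1)/2)$ the length $2s\in(p,p+1)$ forces $N(\kappa,s)\in\{p,p+1\}$. As $\kappa$ increases from $0$ to $1/2$, exactly one integer crosses the boundary of the interval, producing a single transition at $\kappa_{0}=s-p/2\in(0,1/2)$ for even $p$ (the integer $-p/2$ leaving the interval) and at $\kappa_{0}=(p+1)/2-s\in(0,1/2)$ for odd $p$ (the integer $(p+1)/2$ entering the interval); these bounds for $\kappa_{0}$ are immediate from $s\in(p/2,(p+1)/2)$. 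Combined with the reflection and translation symmetries this yields (iii) and (iv). The open-versus-closed distinction between the two items is forced by the strict inequality in~\eqref{eq:contagemZ}: at $\kappa=\pm\kappa_{0}$ an integer sits exactly on the endpoint of the interval and is therefore excluded, so the transition value always belongs to the smaller-count region, i.e., to the $p$-region in both cases.

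The main difficulty is not analytic but notational: one must keep careful track of which endpoints of $(\kappa-s,\kappa+s)$ coincide with integers, and of how the strict inequality in~\eqref{eq:contagemZ} decides the count at the transition values of~$\kappa$, particularly at the boundary $\alpha=a_p$ where the parity of~$p$ dictates whether the two endpoints become simultaneously integers at integer or at half-integer~$\kappa$.
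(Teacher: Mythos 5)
Your proposal is correct and follows essentially the same route as the paper: both reduce the statement via Theorem~\ref{teo:contagem} to counting integers in the open interval $(\kappa-\sqrt{1-\alpha},\kappa+\sqrt{1-\alpha})$, use integer periodicity in $\kappa$ to restrict to a fundamental domain, and identify the same transition value $\kappa_{0}=\sqrt{1-\alpha}-p/2$ in the even case. Your added reflection symmetry $N(-\kappa,s)=N(\kappa,s)$ and the uniform treatment of all four items (the paper proves only item~iii and leaves the rest as ``similar arguments'') are minor streamlinings, not a different method.
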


\begin{proof}
	We will prove item $iii)$; the other proofs follow by similar arguments. Set $\kappa_{0}=\sqrt{1-\alpha}-p/2\in(0,1/2)$ and let $\kappa\in(-\kappa_{0},\kappa_{0})$. By Theorem \ref{teo:contagem}, the deficiency indices $n_{\pm}(H_{\kappa,\alpha})$ equal the number of integers $m$ such that $\left|m-\kappa\right|<\sqrt{1-\alpha}$. Note that this is equivalent to
	\begin{equation}\label{eq:inequality}
		\left(-\frac{p+1}{2}+\kappa,-\frac{p}{2}+\kappa\right)\ni-\sqrt{1-\alpha}+\kappa\;<\;m\;<\;\sqrt{1-\alpha}+\kappa\in\left(\frac{p}{2}+\kappa,\frac{p+1}{2}+\kappa\right).
	\end{equation}
		Since $\left|\kappa\right|<\kappa_{0}=\sqrt{1-\alpha}-\dfrac{p}{2}$ and $p$ is even, the integer numbers~$m$ that satisfy inequality~\eqref{eq:contagemZ} are $m=0,\pm 1,\pm 2,\ldots,\pm p/2$ and, therefore, $n_{\pm}(H_{\kappa,\alpha})=p+1$. To complete this case, note that if $\tilde{\kappa}\in(-\kappa_{0},\kappa_{0})+\mathbb{Z}$, then there exists $\kappa\in(-\kappa_{0},\kappa_{0})$ such that $(\tilde{\kappa}-\kappa)\in\mathbb{Z}$ and, by Theorem~\ref{teo:contagem}, we have $n_{\pm}(H_{\tilde{\kappa},\alpha})=n_{\pm}(H_{\kappa,\alpha})=p+1$.

		Now, let $\kappa_{0}\leq\left|\kappa\right|\leq \frac{1}{2}$. If $\kappa>0$, then
		\begin{equation*}
			-\frac{p}{2}=-\sqrt{1-\alpha}+\kappa_{0}\;\leq\;-\sqrt{1-\alpha}+\kappa\;<\;-\frac{p}{2}+\frac{1}{2},
		\end{equation*}
		and the integers that satisfy~\eqref{eq:contagemZ} are $m=0,\pm 1,\pm 2,\ldots,\pm ({p}/{2}-1),{p}/{2}$. If $\kappa<0$, then
		\begin{equation*}
			\frac{p}{2}-\frac{1}{2}\;\leq\;\sqrt{1-\alpha}-\kappa\;\leq\;\sqrt{1-\alpha}-\kappa_{0}=\frac{p}{2},
		\end{equation*}
		and the allowed integer numbers are $m=\pm 0,\pm1, \pm2,\ldots,\pm({p}/{2}-1),-{p}/{2}$. In any case, $n_{\pm}(H_{\kappa,\alpha})=p$. As before, if $\tilde{\kappa}\notin(-\kappa_{0},\kappa_{0})+\mathbb{Z}$, then there exists $\kappa\in(-1/2,-\kappa_{0}]\cup[\kappa_{0},1/2)$ such that $(\tilde{\kappa}-\kappa)\in\mathbb{Z}$ and again $n_{\pm}(H_{\tilde{\kappa},\alpha})=n_{\pm}(H_{\kappa,\alpha})=p$.
\end{proof}

Now that we have the deficiency indices of the operator $H_{\kappa,\alpha}$, we can parametrize the self-adjoint extensions by unitary matrices~\cite{AkhiezerGlazman1993}. To do so, consider the $K_{\pm}(H_{\kappa,\alpha}^{(m)})$ equation, which reads
\begin{equation*}
	\dfrac{d^{2}\psi}{dr^{2}}+\left[-(\pm i)-\dfrac{(m-\kappa)^{2}+\alpha-\frac{1}{4}}{r^{2}}\right]\psi=0.
\end{equation*}

Using modified Bessel functions, we arrived at the solution

\begin{equation*}
	\psi_{\pm}(r):=r^{\frac{1}{2}}e^{\nu\pi i}K_{\nu}(e^{\mp\frac{\pi}{4}i}r),
\end{equation*}
where $\nu=(m-\kappa)^{2}+\alpha$ and $K_{\nu}$ is the McDonald function of order $\nu$. Note that the analysis of the asymptotic of $K_{\nu}$ (see \cite{abramowitz1964})
\begin{equation*}
	K_{\nu}(z)\sim\frac{1}{2}\Gamma(\nu)\,\big(\frac{z}{2}\big)^{-\nu}
\end{equation*}
implies that the solution $\psi_{\pm}(r)$ is in $\LL^{2}(0,\infty)$ if, and only if, $\nu=(m-\kappa)^{2}+\alpha<1$, which is equivalent to the result in Theorem \ref{teo:contagem}.

Using theses facts and the considerations at the beginning of this section, we see that the deficiency subspaces $K_{\pm}$ of $H_{\kappa,\alpha}$ have an analytic (with respect to $\nu$) basis given by
\begin{equation*}
	\psi^{(m)}_{\pm}(z)=e^{\nu\pi i}K_{\nu}(e^{\mp\frac{\pi}{4}i}r)e^{im\theta},
\end{equation*}
for $m\in\mathbb{Z}$ such that $-\sqrt{1-\alpha}+\kappa<m<\sqrt{1-\alpha}+\kappa$. Denoting by $U$ a unitary map between the finite dimension vector spaces $K_{+}(H_{\kappa,\alpha})$ and $K_{-}(H_{\kappa,\alpha})$ and applying the von Neumann approach to self-adjoint extensions~\cite{AkhiezerGlazman1993}, one finds that such extensions $H_{\kappa,\alpha}^{U}$ are explicitly given by
\begin{equation}\label{eq:domain}
	\begin{split}
		\dom(H_{\kappa,\alpha}^{U})&=\left\{u=v+\psi_{+}-U\psi_{+}\mid u\in \dom (\overline{H}_{\kappa,\alpha}), \psi_+\in K_{+}(H_{\kappa,\alpha})\right\},\\
		H_{\kappa,\alpha}^{U}u&=H_{\kappa,\alpha}^{*}u=\overline{H}_{\kappa,\alpha}v-i\psi_{+}-iU\psi_{+},
	\end{split}
\end{equation}
where, for $\Gamma=\{m\in\mathbb{Z}:-\sqrt{1-\alpha}+\kappa<m<\sqrt{1-\alpha}+\kappa\}$,
\begin{equation}\label{eq:general}
	\begin{split}
		v\in \dom(\overline{H_{\kappa,\alpha}}),\quad \psi_{+}=\sum_{m\in\Gamma}c_{m}\psi_{+}^{(m)},\quad c_{j}\in\mathbb{C},\\
		U\psi_{+}=\sum_{m\in\Gamma}\tilde{c}_{m}\psi_{-}^{(m)},\quad \tilde{c}_{j}=\sum_{l\in\Gamma}\tilde{U}_{jl}c_{l},\quad j\in\Gamma,
	\end{split}
\end{equation}
and $\tilde{U}$ is an element of the unitary group $U(d_{\kappa})$ of $d_{\kappa}\times d_{\kappa}$ unitary complex matrices for $d_{\kappa}=n_{\pm}(H_{\kappa,\alpha})$ and $\overline{T}$ denotes the operator closure of~$T$.

\section{Some thought-experiments}\label{sectTE}

In this section we try to propose some  experiments for which our analysis above could be relevant. The main difficulty in proposing an experiment where such different self-adjoint extensions could play an explicit role is the presence of the additional potential $V_{\alpha}\left(r\right)=\frac{\alpha}{r^{2}}$. We first note that this is  a standard potential to build models with different self-adjoint extensions, including the theoretical speculations about the role of different extensions, and also a source of controversy in the physics literature; see, for instance,  \cite{MRMYSB, CEFGC,BMG,GR,AACMRNYSB} and Chapter~7 of the book~\cite{GTV} (and references therein). 

In the following we describe some possible situations modelled by operators directly related to the proposed operator action~\eqref{eq:ABdefmeu}; in prospective real experiments, the following proposals will certainly be enhanced by the backgrounds and skills of experimentalists. 

\begin{enumerate}

\item The AB effect has an application in atomic interferometry with neutral atoms, and the authors of~\cite{WHW}  have proposed an interesting experiment (see also~\cite{wilkens,ASV}). The radial electric field~$E$ of a homogeneous straight charged wire, with charge density~$\varrho$, polarizes a scattered neutral atom with velocity~$v$; then a uniform magnetic field~$B$ is applied parallel to the wire. The atom with mass~$M'$ moving in these two fields will obtain an AB phase. If terms of order $v^2/c^2$ are neglected ($c$~is the speed of light), the Lagrangian for the atom is given by 
\[
L=\frac{1}{2} M v^{2}+\gamma (B \times E) \, \cdot \frac{v}{c}+\frac{1}{2} \gamma E^{2},
\]where $M=M'+\gamma B^2/c^2$ and~$\gamma$ is the electric polarizability. The corresponding Schr\"odinger operator has the form~\eqref{eq:ABdefmeu} with 
\[
\kappa=\frac{\gamma\varrho}{2\pi c}
\] (see equation~(2) in~\cite{ASV}, which appears in polar coordinates and without the action of the unitary transformation $Wf(r)=r^{\frac{1}{2}}f(r)$; we have also relabeled some parameters to differ from our~$\kappa$ and~$\alpha$). Hence, different self-adjoint extensions may intervene and, according to the results in Section~\ref{sectMR}, the  deficiency numbers may be controlled by selecting the electric polarizability and/or the wire charge density.

\item  In~\cite{ASV}, another set-up was proposed that is also modelled by the operator~\eqref{eq:ABdefmeu}.  It is the motion of a particle in the usual AB potential which is superimposed by a  scalar potential
\[
U(r) = \frac{\alpha}{ r^2}
\] (i.e., our $V_\alpha$) of an electrically charged string in the direction of the solenoid. See equations~(3) and~(5) in~\cite{ASV} (in case their $n\cdot r=0$). 

\item  A rather idealized situation  is the combination of a thin solenoid along the $z$-axis with an electric point charge~$Q$ at the origin of coordinates, and the scattered particle in the $xy$-plane is an electric dipole~$d$, pointed in the $z$-direction and with nonzero net charge. The electric potential due to~$Q$ at~$d$ has approximately two components, the Coulomb $\beta/r$ and the dipole one $\alpha/r^2$ (recall that if $E(r)$ is the electric field on a charged particle, then on a dipole it is~$(d\cdot \nabla)E$, which, in the $xy$-plane in cylindrical coordinates, results proportional to $dE(r)/dr$), so that the initial operator is written, in this approximation,  as
\begin{equation}\label{eqABdipole}
	 H_{\kappa,\alpha,\beta}\varphi:=\left(i\nabla+\mathbf{A}_{\kappa}\right)^{2}\varphi+V_{\alpha}\varphi+\frac{\beta}{r}\varphi.
\end{equation}
We have checked that the deficiency indices of $H_{\kappa,\alpha,\beta}$ also varies with~$\kappa$, so that~\eqref{eqABdipole} presents a corresponding AB-like phenomenon as~\eqref{eq:ABdefmeu} and discussed in Theorems~\ref{teo:contagem} and~\ref{teo:teo2}. The main criticisms of~\eqref{eqABdipole} are that for very small~$r$ the dipole approximation could fail, so it is appropriate to think of an ideal dipole,  and that the dipole direction should be kept approximately constant; but we think that, as an initial proposal of a thought-experiment, it may have some value.
\end{enumerate}

\section{Conclusions}\label{sectConclusions}

For the case $3/4<\alpha<1$, one has $\alpha\in(b_{p},a_{p}]$ with $p=0$, and for $\kappa\in(-\sqrt{1-\alpha},\sqrt{1-\alpha})+\mathbb{Z}$ the operator $H_{\kappa,\alpha}$  has deficiency indices $n_{\pm}=1$,  and $n_{\pm}=0$ otherwise. This means that, if $0\leq\kappa<\sqrt{1-\alpha}$, then the operator $H_{\kappa,\alpha}$ has a one parameter family of self-adjoint extensions and a boundary condition is necessary to specify each of them; these boundary conditions are present in the domain of the operator~\eqref{eq:domain}. For $\sqrt{1-\alpha}\leq\kappa\leq 1/2$, the operator has only one self-adjoint extension and no boundary condition is needed; in this case, the self-adjoint extension is the operator closure of the initial Hamiltonian. 

Another interesting case occurs when one considers $\alpha=0$ in~\eqref{eq:ABdefmeu}, so  $\alpha\in(b_{p},a_{p}]$ with $p=2$, and we have the initial Hamiltonian $H_{\kappa}:=H_{\kappa,0}$ considered by Aharonov and Bohm \cite{AharonovBohm1959} (for zero radius) and our results establish that $n_{\pm}(H_{\kappa})=1$, if $\kappa\in\mathbb{Z}$, and $n_{\pm}(H_{\kappa,\alpha})=2$ for all $\kappa\notin\mathbb{Z}$. So, the initial Hamiltonian has a one parameter family of self-adjoint extensions when the magnetic field is turned off and a four parameter family when the magnetic field grows up from zero with $\kappa\notin\mathbb{Z}$. This can be visualized in the  formula for the general self-adjoint extension~\eqref{eq:general} since the unitary group $U(n_{\pm})$ has dimension $n_{\pm}^{2}$ which explicitly depends on~$\kappa$.

Summing up, in the proposed (simple) model the magnetic potential, in a free magnetic field region, not only may  influence  the scattering parameters and/or eigenvalues of  the energy operator, but also dictates  the number of self-adjoint extensions of the initial Hamiltonian operator, in particular determining whether the particle interacts with the solenoid or not and giving rise to different physics. Experimentally, this would (in principle) be observed by how the apparatus is prepared; that is, in case of just one self-adjoint extension all experimental results would be independent of the experimental preparation, whereas in the case with more than one extension these results would depend on the details of the experimental setup. Potential experimental situations are proposed in Section~\ref{sectTE}.

\subsubsection*{Acknowledgments} 
CRdO thanks the partial support by CNPq (a Brazilian government agency, under contract 303503/2018-1).

\end{document}